\newlength{\normalparindent}
\newlength{\normalparskip}
\newcommand{\setparstyle}{\setlength{\parindent}{\normalparindent}\setlength{\parskip}{\normalparskip}}
\ifxetex\usepackage{xltxtra}\fi
\ifluatex\usepackage{realscripts}\fi
\newcommand{\divisionfont}{\relax}
\newcommand{\blocktitlefont}{\relax}
\newcommand{\contentsfont}{\relax}
\newcommand{\divisionnameptx}{\relax}%
\newcommand{\authorsptx}{\relax}%
\NewDocumentEnvironment{sectionptx}{mmmmmmm}
{%
\renewcommand{\divisionnameptx}{#1}%
\renewcommand{\authorsptx}{#5}%
\section[{#4}]{#2}%
\label{#7}%
}{}%
\NewDocumentEnvironment{subsectionptx}{mmmmmmm}
{%
\renewcommand{\divisionnameptx}{#1}%
\renewcommand{\authorsptx}{#5}%
\subsection[{#4}]{#2}%
\label{#7}%
}{}%
\NewDocumentEnvironment{references-section}{mmmmmmm}
{%
\renewcommand{\divisionnameptx}{#1}%
\renewcommand{\authorsptx}{#5}%
\section[{#4}]{#2}%
\label{#7}%
}{}%
\NewDocumentEnvironment{references-section-numberless}{mmmmmmm}
{%
\renewcommand{\divisionnameptx}{#1}%
\renewcommand{\authorsptx}{#5}%
\section*{#2}%
\addcontentsline{toc}{section}{#4}
\label{#7}%
}{}%
\titleformat{\part}[display]
{\divisionfont\Huge\bfseries\centering}{\divisionnameptx\space\thepart}{30pt}{\Huge#1}
[{\Large\centering\authorsptx}]
\titleformat{\chapter}[display]
{\divisionfont\huge\bfseries}{\divisionnameptx\space\thechapter}{20pt}{\Huge#1}
[{\Large\authorsptx}]
\titlespacing*{\chapter}{0pt}{50pt}{40pt}
\titleformat{\section}[hang]
{\divisionfont\Large\bfseries}{\thesection}{1ex}{#1}
[{\large\authorsptx}]
\titlespacing*{\section}{0pt}{3.5ex plus 1ex minus .2ex}{2.3ex plus .2ex}
\titleformat{\subsection}[hang]
{\divisionfont\large\bfseries}{\thesubsection}{1ex}{#1}
[{\normalsize\authorsptx}]
\titlespacing*{\subsection}{0pt}{3.25ex plus 1ex minus .2ex}{1.5ex plus .2ex}
\titleformat{\subsubsection}[hang]
{\divisionfont\normalsize\bfseries}{\thesubsubsection}{1em}{#1}
[{\small\authorsptx}]
\titlespacing*{\subsubsection}{0pt}{3.25ex plus 1ex minus .2ex}{1.5ex plus .2ex}
\titleformat{\paragraph}[hang]
{\divisionfont\normalsize\bfseries}{\theparagraph}{1em}{#1}
[{\small\authorsptx}]
\titlespacing*{\paragraph}{0pt}{3.25ex plus 1ex minus .2ex}{1.5em}
\newcommand{\terminology}[1]{\textbf{#1}}
\newcommand{\pubtitle}[1]{\textsl{#1}}
\numberwithin{equation}{section}
\newlist{referencelist}{description}{4}
\setlist[referencelist]{leftmargin=!,labelwidth=!,labelsep=0ex,itemsep=1.0ex,topsep=1.0ex,partopsep=0pt,parsep=0pt}
\patchcmd\Hy@EveryPageBoxHook{\Hy@EveryPageAnchor}{\Hy@hypertexnamestrue\Hy@EveryPageAnchor}{}{\fail}
\newtcolorbox[auto counter, number within=section]{block}{}
\newtcolorbox[auto counter, number within=section]{project-distinct}{}
\newtcolorbox[auto counter, number within=tcb@cnt@block, number freestyle={\noexpand\thetcb@cnt@block(\noexpand\alph{\tcbcounter})}]{subdisplay}{}
\newtcolorbox[use counter from=block]{theorem}[4]{title={{#1~\thetcbcounter\notblank{#2#3}{\space}{}\notblank{#2}{\space#2}{}\notblank{#3}{\space(#3)}{}}}, phantomlabel={#4}, breakable, after={\par}, fontupper=\itshape, theoremstyle, }
\newtcolorbox[use counter from=block]{corollary}[4]{title={{#1~\thetcbcounter\notblank{#2#3}{\space}{}\notblank{#2}{\space#2}{}\notblank{#3}{\space(#3)}{}}}, phantomlabel={#4}, breakable, after={\par}, fontupper=\itshape, corollarystyle, }
\newtcolorbox{proof}[3]{title={\notblank{#2}{#2}{#1.}}, phantom={\label{#3}\hypertarget{#3}{}}, breakable, after={\par}, proofstyle, before upper app={\setparstyle} }
\newtcolorbox[use counter from=block]{definition}[3]{title={{#1~\thetcbcounter\notblank{#2}{\space\space#2}{}}}, phantomlabel={#3}, breakable, after={\par}, definitionstyle, }
\newtcolorbox[use counter from=block]{remark}[3]{title={{#1~\thetcbcounter\notblank{#2}{\space\space#2}{}}}, phantomlabel={#3}, breakable, after={\par}, remarkstyle, }
\newtcolorbox[use counter from=block]{example}[3]{title={{#1~\thetcbcounter\notblank{#2}{\space\space#2}{}}}, phantomlabel={#3}, breakable, after={\par}, examplestyle, }
\NewDocumentEnvironment{introduction}{m}
{\notblank{#1}{\noindent\textbf{#1}\space}{}}{\par\medskip}
\title{Correctness of Extended RSA Public Key Cryptosystem}
\author{Dar-jen Chang\\
University of Louisville\\
\href{mailto:djchan01@louisville.edu}{\nolinkurl{djchan01@louisville.edu}}
\and
Suranjan Gautam\\
University of Louisville\\
\href{mailto:suranjan.gautam@louisville.edu}{\nolinkurl{suranjan.gautam@louisville.edu}}
}
\date{}
\begin{document}
\raggedbottom
\label{shorttitlelowercase}\hypertarget{shorttitlelowercase}{}
\maketitle
\thispagestyle{empty}
\renewcommand*{\abstractname}{Abstract}
\begin{abstract}
This paper proposes an alternative approach to formally establishing the correctness of the RSA public key cryptosystem. The methodology presented herein deviates slightly from conventional proofs found in existing literature. Specifically, this study explores the conditions under which the choice of the positive integer N, a fundamental component of RSA, can be extended beyond the standard selection criteria. We derive explicit conditions that determine when certain values of N are valid for the encryption scheme and explain why others may fail to satisfy the correctness requirements. The scope of this paper is limited to the mathematical proof of correctness for RSA-like schemes, deliberately omitting issues related to the cryptographic security of RSA.%
\end{abstract}
\typeout{************************************************}
\typeout{Section 1 Introduction and Definitions}
\typeout{************************************************}
\begin{sectionptx}{Section}{Introduction and Definitions}{}{Introduction and Definitions}{}{}{section-1}
\begin{introduction}{}%
This section introduces the mathematical framework for our extended RSA encryption scheme and establishes the correctness conditions that will be analyzed throughout this paper.%
\end{introduction}%
\typeout{************************************************}
\typeout{Subsection 1.1 Problem Statement}
\typeout{************************************************}
\begin{subsectionptx}{Subsection}{Problem Statement}{}{Problem Statement}{}{}{subsection-1}
\begin{introduction}{}%
To achieve our goals, we begin by defining an extended framework for the RSA public key encryption system.%
\end{introduction}%
\begin{definition}{Definition}{Extended RSA Public Key Encryption Scheme.}{def-generalized-rsa}%
Given a positive integer \(N\) and \(\phi(N)\) denoting the Euler's totient function, we choose an integer \(e\) such that \(1 \leq e < \phi(N)\) and \(\gcd(e, \phi(N)) = 1\). We then compute the multiplicative inverse of \(e\) modulo \(\phi(N)\) and denote it as \(d\), where \(ed \equiv 1 \pmod{\phi(N)}\).%
\par
For any integer \(m\), \(1 \leq m < N\), the encryption and decryption functions are defined as follows:%
\par
\terminology{Encryption Function (\(Enc\)):} \(c = Enc(m) = m^e \pmod N\)%
\par
\terminology{Decryption Function (\(Dec\)):} \(m = Dec(c) = c^d \pmod N\)%
\end{definition}
\begin{remark}{Remark}{}{subsection-1-4}%
The pair \((e, N)\) forms the public key, while \((d, N)\) forms the private key.%
\end{remark}
\begin{definition}{Definition}{Correctness of Generalized RSA Public Key Encryption Scheme.}{def-generalized-rsa-2}%
The correctness condition for this encryption scheme is described as follows:%
\par
For any integer \(m\), \(1 \leq m < N\), we have:%
\begin{equation}
Dec(Enc(m)) \equiv m \pmod{N}\label{eq-correctness}
\end{equation}
or equivalently,%
\begin{equation}
(m^e)^d \equiv m \pmod{N}\label{def-generalized-rsa-2-2-5}
\end{equation}
\end{definition}
The classical (or textbook) RSA algorithm chooses \(N\) as the product of two large, distinct prime numbers, which guarantees that \hyperref[eq-correctness]{({\xreffont\ref{eq-correctness}})} holds. However, this paper seeks to investigate the generalized criteria under which \hyperref[eq-correctness]{({\xreffont\ref{eq-correctness}})} remains valid for various choices of \(N\). To achieve this, we provide a modified proof of correctness that builds upon existing proofs in the literature and derive necessary criteria for the selection of \(N\).%
\end{subsectionptx}
\end{sectionptx}
\typeout{************************************************}
\typeout{Section 2 Preliminary Number Theory Background}
\typeout{************************************************}
\begin{sectionptx}{Section}{Preliminary Number Theory Background}{}{Preliminary Number Theory Background}{}{}{section-2}
\begin{introduction}{}%
In this section, we will cover some basic concepts of number theory such as divisibility, fundamental theorem of arithmetic, greatest common divisor, and modular arithmetic, which are important for understanding RSA.%
\end{introduction}%
\typeout{************************************************}
\typeout{Subsection 2.1 Integer Division and Fundamental theorem of Arithmetic}
\typeout{************************************************}
\begin{subsectionptx}{Subsection}{Integer Division and Fundamental theorem of Arithmetic}{}{Integer Division and Fundamental theorem of Arithmetic}{}{}{subsection-2-1}
Integer division and remainder are fundamental operations in modular arithmetic. Given two integers \(a\) and \(b\), where \(b \neq 0\), the integer division of \(a\) by \(b\) is denoted as \(q = a \div b\), and the remainder is denoted as \(r = a \bmod b\). These values satisfy the equation:%
\par
\(a = b \cdot q + r\), where \(0 \leq r < b\).%
\par
If the remainder \(r\) is zero, then \(a\) is divisible by \(b\), and we say that \(b\) divides \(a\) and is denoted by \(b|a\).%
\begin{theorem}{Theorem}{Fundamental theorem of Arithmetic.}{}{Factorization-Theorem}%
Every integer \(N\) greater than 1 can be uniquely expressed as a product of prime factors:%
\begin{equation}
N = p_1^{n_1} p_2^{n_2} \cdots p_k^{n_k} = \prod_{i=1}^{k} p_i^{n_i}\label{Factorization-Theorem-2-1-2}
\end{equation}
where \(p_1,p_2,\ldots,p_k\) are distinct prime numbers and \(n_1,n_2,\ldots,n_k\) are their corresponding positive integer exponents.%
\end{theorem}
If%
\begin{equation*}
N = p_1^{n_1} p_2^{n_2} \cdots p_k^{n_k}
\end{equation*}
and \(a|N\), then \(a\) can be expressed as a product of the prime factors:%
\begin{equation*}
a = p_1^{m_1} p_2^{m_2} \cdots p_k^{m_k}
\end{equation*}
where \(0 \leq m_i \leq n_i\) (usually if \(m_i = 0\), \(p_i\) is not included in the product). This means that \(a\) can be formed by taking some of the prime factors of \(N\) and raising them to powers that do not exceed the corresponding powers in the prime factorization of \(N\).%
\end{subsectionptx}
\typeout{************************************************}
\typeout{Subsection 2.2 The Greatest Common Divisor and Co-Prime}
\typeout{************************************************}
\begin{subsectionptx}{Subsection}{The Greatest Common Divisor and Co-Prime}{}{The Greatest Common Divisor and Co-Prime}{}{}{subsection-2-2}
The greatest common divisor (gcd) of two integers \(a\) and \(b\), denoted as \(\gcd(a, b)\), is the largest positive integer that divides both \(a\) and \(b\) without leaving a remainder.%
\par
Two integers \(a\) and \(b\) are said to be co-prime (or relatively prime) if their gcd is 1, i.e., \(\gcd(a, b) = 1\). This implies that \(a\) and \(b\) have no common positive integer factors other than 1.%
\par
Many useful results in number theory and cryptography rely on the properties of co-prime integers. For example, in the RSA public key cryptosystem, the choice of the public exponent \(e\) must be co-prime to the Euler totient function \(\phi(N)\) to ensure that the encryption and decryption processes work correctly. To prove the correctness of the RSA algorithm, we need the following two theorems:%
\begin{theorem}{Theorem}{Product of Divisors.}{}{product-of-divisors}%
For any three integers \(a\), \(b\), and \(N\), if \(a|N\), \(b|N\) and \(\gcd(a, b) = 1\), then \(ab|N\).%
\end{theorem}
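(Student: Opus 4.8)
The plan is to prove this via the Fundamental Theorem of Arithmetic (Theorem~\ref{Factorization-Theorem}), using the description of divisors in terms of prime factorizations that was developed immediately after it. First I would write the prime factorization $N = \prod_{i=1}^{k} p_i^{n_i}$. Since $a \mid N$ and $b \mid N$, each of $a$ and $b$ is a product of these same primes raised to bounded exponents, so I can write $a = \prod_{i=1}^{k} p_i^{a_i}$ and $b = \prod_{i=1}^{k} p_i^{b_i}$ with $0 \le a_i \le n_i$ and $0 \le b_i \le n_i$ for every $i$.

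The crucial step is to translate the hypothesis $\gcd(a,b)=1$ into a statement about these exponents. I claim that for each index $i$ at least one of $a_i$ and $b_i$ is zero, i.e.\ $\min(a_i,b_i)=0$. Indeed, if some prime $p_i$ satisfied both $a_i \ge 1$ and $b_i \ge 1$, then $p_i$ would divide both $a$ and $b$, forcing $\gcd(a,b) \ge p_i > 1$, a contradiction. This is the heart of the argument and the step I expect to require the most care, since it is exactly where coprimality enters; the conclusion genuinely fails without it.

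With this disjoint-support property in hand, the conclusion follows by a short computation. Forming the product gives $ab = \prod_{i=1}^{k} p_i^{a_i + b_i}$, and for each $i$ the fact that $\min(a_i,b_i)=0$ means $a_i + b_i = \max(a_i,b_i) \le n_i$. Hence every exponent appearing in $ab$ is bounded by the corresponding exponent in $N$, which by the divisor characterization means precisely that $ab \mid N$, as required.

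Finally, I would note an alternative route that sidesteps factorization entirely: since $\gcd(a,b)=1$, there exist integers $x,y$ with $ax+by=1$; writing $N = a m = b n$ and substituting into $N = N(ax+by)$ yields $N = ab(nx+my)$, so $ab \mid N$. I mention this because it is shorter, but the factorization proof fits more naturally with the number-theoretic machinery this section has just introduced and is the version I would present as the main argument.
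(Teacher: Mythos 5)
Your main argument is exactly the paper's proof: factor $N$, $a$, and $b$ over the primes of $N$, use $\gcd(a,b)=1$ to force $\min(a_i,b_i)=0$ for each $i$, and conclude $a_i+b_i\le n_i$ so that $ab\mid N$; you even justify the exponent step slightly more explicitly than the paper does. The Bézout-based alternative you sketch at the end is also valid (and more elementary), but your primary argument matches the paper's, so there is nothing to correct.
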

\begin{proof}{Proof}{}{product-of-divisors-3}
Assume%
\begin{equation*}
N = p_1^{n_1} p_2^{n_2} \cdots p_k^{n_k}
\end{equation*}
as given in the Fundamental theorem of Arithmetic. Since \(a|N\) and \(b|N\), we can express \(a\) and \(b\) in terms of the prime factors of \(N\):%
\begin{equation*}
a = p_1^{m_1} p_2^{m_2} \cdots p_k^{m_k}
\end{equation*}
and%
\begin{equation*}
b = p_1^{l_1} p_2^{l_2} \cdots p_k^{l_k}
\end{equation*}
where \(0 \leq m_i \leq n_i\) and \(0 \leq l_i \leq n_i\). Since \(\gcd(a, b) = 1\), it follows that for each prime factor \(p_i\), either \(m_i = 0\) or \(l_i = 0\). Therefore, the product \(ab\) can be expressed as:%
\begin{equation*}
ab = p_1^{m_1 + l_1} p_2^{m_2 + l_2} \cdots p_k^{m_k + l_k}
\end{equation*}
where \(m_i + l_i \leq n_i\) for each \(i\). This implies that \(ab\) is a product of the prime factors of \(N\) raised to powers that do not exceed the corresponding powers in the prime factorization of \(N\). Hence, \(ab|N\).%
\end{proof}
As a consequence, the product of two coprime integers divides any common multiple of those integers. This theorem is particularly useful in the context of RSA, where we often deal with products of two coprime integers.%
\begin{theorem}{Theorem}{}{}{co-prime-of-divisors}%
Given any two coprime integers \(P\) and \(Q\) and any two integers \(p\) and \(q\), if \(p|P\) and \(q|Q\), then \(p\) and \(q\) are coprime, i.e., \(\gcd(p, q) = 1\).%
\end{theorem}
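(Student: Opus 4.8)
The plan is to argue that any common divisor of $p$ and $q$ is forced to be a common divisor of $P$ and $Q$, and then to exploit the hypothesis $\gcd(P,Q)=1$ to conclude that this common divisor can only be $1$. This reduces the claim to two elementary facts about divisibility that I can assemble directly from the definitions given in this section.

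First I would set $d = \gcd(p,q)$, so that by definition $d \mid p$ and $d \mid q$. Next I would invoke transitivity of divisibility: since $d \mid p$ and $p \mid P$, writing $p = d s$ and $P = p t$ gives $P = d(st)$, so $d \mid P$; the identical argument with $d \mid q$ and $q \mid Q$ yields $d \mid Q$. At this stage $d$ is a positive integer dividing both $P$ and $Q$, i.e. a common divisor of $P$ and $Q$. The final step is to use the fact that every common divisor of two integers divides their greatest common divisor, so $d \mid \gcd(P,Q)$. Since $\gcd(P,Q)=1$ by hypothesis and $d \geq 1$, this forces $d = 1$, giving $\gcd(p,q)=1$ as required.

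An alternative route, more in keeping with the prime-factorization style used in the proof of the \hyperref[product-of-divisors]{Product of Divisors} theorem, is available as well. Because $\gcd(P,Q)=1$, the prime factorizations of $P$ and $Q$ share no common prime. By the Fundamental theorem of Arithmetic, the prime factors of $p$ are a subset of those of $P$, and the prime factors of $q$ are a subset of those of $Q$; hence $p$ and $q$ have no prime factor in common, which means $\gcd(p,q)=1$. I would likely present the transitivity argument as the main proof, since it is shorter and avoids case analysis on exponents.

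I do not expect a genuine obstacle here, as the statement is a routine consequence of the divisibility framework. The one point that needs care is the justification that a common divisor of $P$ and $Q$ must divide $\gcd(P,Q)$ rather than merely being bounded by it; I would make this explicit (or, equivalently, note that the common divisor $d$ satisfies $d \leq \gcd(P,Q)=1$ and $d \geq 1$) so that the conclusion $d=1$ is fully rigorous and does not silently assume the very property being used.
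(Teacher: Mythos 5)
Your proposal is correct, and your primary argument takes a genuinely different route from the paper. The paper does not write out a proof at all: it simply remarks that the theorem ``can be proved using similar reasoning as in Theorem~\ref{product-of-divisors}'', i.e.\ by the prime-factorization argument --- exactly the alternative route you sketch in your second paragraph (the prime factors of \(p\) lie among those of \(P\), those of \(q\) among those of \(Q\), and coprimality of \(P\) and \(Q\) means these two prime sets are disjoint). Your main argument instead sets \(d=\gcd(p,q)\), uses transitivity of divisibility to show \(d\) is a common divisor of \(P\) and \(Q\), and concludes \(d\le\gcd(P,Q)=1\). This is more elementary in that it never invokes the Fundamental Theorem of Arithmetic, and it sidesteps the unique-factorization bookkeeping entirely; the paper's approach buys uniformity with the surrounding proofs, which are all cast in terms of prime exponents. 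Your closing caution is well placed and easily discharged: since the paper defines \(\gcd(P,Q)\) as the \emph{largest} positive common divisor, the inequality \(1\le d\le\gcd(P,Q)=1\) suffices, and you do not need the stronger (Bézout-style) fact that every common divisor divides the gcd.
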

\begin{proof}{Proof}{}{co-prime-of-divisors-2}
This theorem can be proved using similar reasoning as in \hyperref[product-of-divisors]{Theorem~{\xreffont\ref{product-of-divisors}}, p.\,\pageref{product-of-divisors}}.%
\end{proof}
\begin{theorem}{Theorem}{}{}{product_of_gcd}%
For any three positive integers \(m\), \(P\), and \(Q\), we have:%
\begin{align*}
(1) \quad & \gcd(m, PQ) \leq \gcd(m, P) \times \gcd(m, Q)\\
(2) \quad & \text{ if } \gcd(P, Q) = 1, \text{ then } \gcd(m, PQ) = \gcd(m, P) \times \gcd(m, Q)
\end{align*}
\begin{proof}{Proof}{}{product_of_gcd-1-3}
(1)%
\par
By the definition of gcd, we have:%
\begin{equation*}
g = \gcd(m, PQ) = \max\{d \mid d \text{ divides } m \text{ and } d \text{ divides } PQ\}
\end{equation*}
Since g divides \(PQ\) and \(m\), g can be expressed as \(p * q\), where \(p\) divides \(P\) and \(m\), and \(q\) divides \(Q\) and \(m\). Therefore,%
\begin{equation*}
p \leq \gcd(m, P) \quad \text{and} \quad q \leq \gcd(m, Q)
\end{equation*}
which implies that%
\begin{equation*}
g = p * q \leq \gcd(m, P) * \gcd(m, Q)
\end{equation*}
\par
This proves part (1).%
\par
(2)%
\par
Let%
\begin{equation*}
p = \gcd(m, P) \quad \text{and} \quad q = \gcd(m, Q)
\end{equation*}
Since \(\gcd(P, Q) = 1\), by \hyperref[co-prime-of-divisors]{Theorem~{\xreffont\ref{co-prime-of-divisors}}, p.\,\pageref{co-prime-of-divisors}}, we have \(\gcd(p, q) = 1\).%
\par
Consequently, by \hyperref[product-of-divisors]{Theorem~{\xreffont\ref{product-of-divisors}}, p.\,\pageref{product-of-divisors}}, we have \(p*q\) divides both \(m\) and \(PQ\). Therefore,%
\begin{equation*}
\gcd(m,P) \times \gcd(m,Q) = p \times q \leq \gcd(m, PQ) 
\end{equation*}
Together with (1), we prove part (2).%
\end{proof}
\end{theorem}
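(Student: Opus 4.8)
The plan is to reduce both claims to a purely exponent-level statement by invoking the Fundamental theorem of Arithmetic. First I would fix a common list of primes $p_1,\dots,p_k$ covering $m$, $P$, and $Q$ simultaneously and write $m=\prod_i p_i^{a_i}$, $P=\prod_i p_i^{b_i}$, $Q=\prod_i p_i^{c_i}$, allowing zero exponents. The structural fact I would record is that a gcd is computed prime-by-prime as a minimum of exponents, so that $\gcd(m,P)=\prod_i p_i^{\min(a_i,b_i)}$, $\gcd(m,Q)=\prod_i p_i^{\min(a_i,c_i)}$, and, since $PQ=\prod_i p_i^{b_i+c_i}$, also $\gcd(m,PQ)=\prod_i p_i^{\min(a_i,\,b_i+c_i)}$.

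With this dictionary in hand, part (1) collapses to the scalar inequality $\min(a,b+c)\le \min(a,b)+\min(a,c)$ for nonnegative integers $a,b,c$, which I would dispatch by a short case split according to whether $a$ lies below or above each of $b$ and $c$. Multiplying the resulting prime-power inequalities over all $i$ then yields $\gcd(m,PQ)\le \gcd(m,P)\cdot\gcd(m,Q)$.

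For part (2), the hypothesis $\gcd(P,Q)=1$ forces at least one of $b_i,c_i$ to vanish at every prime $p_i$. In either case the exponent relation becomes an equality: when $c_i=0$ the right-hand side reduces to $\min(a_i,b_i)$ and the left-hand side is $\min(a_i,b_i)$ as well, and symmetrically when $b_i=0$. Hence equality holds prime-by-prime and therefore globally.

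I expect the main obstacle to be justifying the min-of-exponents formula at the level of rigor the paper maintains, since its earlier results (Product of Divisors and the coprimality-of-divisors theorem) are phrased through divisibility rather than exponents. To stay close to those tools I would instead recast part (1) structurally: writing $g=\gcd(m,PQ)$, I would split $g=p\cdot q$ with $p\mid P$ and $q\mid Q$ by letting $p$ absorb the $P$-share of each prime power of $g$ (legitimate because $g\mid PQ$ bounds each exponent by $b_i+c_i$); then $p,q\mid g\mid m$ gives $p\le\gcd(m,P)$ and $q\le\gcd(m,Q)$, so $g=pq\le\gcd(m,P)\cdot\gcd(m,Q)$. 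For the reverse inequality in (2), I would set $p=\gcd(m,P)$, $q=\gcd(m,Q)$, invoke Theorem~\ref{co-prime-of-divisors} to get $\gcd(p,q)=1$, apply Theorem~\ref{product-of-divisors} to conclude $pq\mid m$, note $pq\mid PQ$, and read off $pq\le\gcd(m,PQ)$.
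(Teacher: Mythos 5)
Your final, ``recast'' version of the argument is essentially the paper's own proof: for (1) you write \(g=\gcd(m,PQ)\), factor it as \(g=p\cdot q\) with \(p\mid P\), \(q\mid Q\), and both dividing \(m\), and conclude \(g\le\gcd(m,P)\cdot\gcd(m,Q)\); for (2) you set \(p=\gcd(m,P)\), \(q=\gcd(m,Q)\), invoke the coprimality-of-divisors theorem to get \(\gcd(p,q)=1\) and the Product of Divisors theorem to get \(pq\mid m\) and \(pq\mid PQ\), hence \(pq\le\gcd(m,PQ)\). In fact you are more careful than the paper on the one delicate step: the paper simply asserts that \(g\) ``can be expressed as \(p*q\)'' with \(p\mid P\) and \(q\mid Q\), whereas you justify the split by giving \(p\) the exponent \(\min(e_i,b_i)\) at each prime and checking that the leftover exponent is at most \(c_i\), which is exactly what is needed to make that assertion rigorous. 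Your first, exponent-valuation route --- reducing both parts to \(\min(a,b+c)\le\min(a,b)+\min(a,c)\), with equality whenever \(b=0\) or \(c=0\) --- is a genuinely different and self-contained argument; it buys a uniform treatment of both parts (the equality in (2) falls out of the same prime-by-prime computation rather than requiring a separate reverse inequality and an appeal to two earlier theorems), at the cost of first establishing the min-of-exponents formula for the gcd, which the paper never states explicitly but which follows from its discussion after the Fundamental Theorem of Arithmetic. Either route is correct.
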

\end{subsectionptx}
\typeout{************************************************}
\typeout{Subsection 2.3 Modular Arithmetc and Congruences}
\typeout{************************************************}
\begin{subsectionptx}{Subsection}{Modular Arithmetc and Congruences}{}{Modular Arithmetc and Congruences}{}{}{subsection-2-3}
Modular arithmetic is a system of arithmetic for integers, where numbers "wrap around" after reaching a certain value, known as the modulus. More specifically, given a positive integer \(N\), we can partition integers into \(N\) equivalence classes denoted by \({0},{1}, \cdots, {N-1}\)  so that each \({i}\) represents the set of integers that have the same remainder \(i\) when they are divided by \(N\).%
\begin{definition}{Definition}{Congruence Relation.}{def-congruence}%
We say that two integers \(a\) and \(b\) are congruent modulo \(N\), denoted as \(a \equiv b \pmod{N}\), if they have the same remainder when divided by \(N\). Equivalently, \(a \equiv b \pmod{N}\) if and only if \(N | (a - b)\).%
\end{definition}
\begin{theorem}{Theorem}{}{}{congruence-theorem}%
We are given a positive integer \(N\) and%
\begin{equation*}
N = P \times Q
\end{equation*}
where \(P\) and \(Q\) are positive integers and \(\gcd(P, Q) = 1\). The following property holds for any two integers \(a\) and \(b\):%
\begin{equation*}
a \equiv b \pmod{N}
\end{equation*}
if and only if%
\begin{equation*}
a \equiv b \pmod{P} \quad \text{and} \quad a \equiv b \pmod{Q}
\end{equation*}
\end{theorem}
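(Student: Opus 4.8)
The plan is to reduce both congruence statements to divisibility statements via the definition of congruence (Definition~\ref{def-congruence}), and then dispatch the two directions of the biconditional separately. Writing $x = a - b$, the definition tells us that $a \equiv b \pmod{N}$ means $N \mid x$, that $a \equiv b \pmod{P}$ means $P \mid x$, and that $a \equiv b \pmod{Q}$ means $Q \mid x$. So, after this translation, the entire claim becomes the purely number-theoretic equivalence: $PQ \mid x$ if and only if both $P \mid x$ and $Q \mid x$. This reformulation is the key move, since it strips away the modular-arithmetic language and exposes the problem as a statement about divisors of a single integer $x$.

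For the forward direction I would assume $N \mid x$. Since $N = PQ$, we have $P \mid N$ and $Q \mid N$, so transitivity of divisibility (if $P \mid N$ and $N \mid x$ then $P \mid x$) immediately yields $P \mid x$, and symmetrically $Q \mid x$. Translating back, this is exactly $a \equiv b \pmod{P}$ and $a \equiv b \pmod{Q}$. For the backward direction I would assume $P \mid x$ and $Q \mid x$. This is precisely the hypothesis of the Product of Divisors theorem (Theorem~\ref{product-of-divisors}) with the common multiple $x$ playing the role of its ``$N$'': we have $P \mid x$, $Q \mid x$, and $\gcd(P,Q) = 1$, so that theorem concludes $PQ \mid x$, i.e.\ $N \mid x$, which is $a \equiv b \pmod{N}$.

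The forward direction is essentially free, and the backward direction contains the only real content — but that content has already been isolated and proved as Theorem~\ref{product-of-divisors}, so this proof is little more than a correct bookkeeping of hypotheses. The one point demanding care, and the place I expect any genuine subtlety to hide, is the indispensability of the coprimality assumption $\gcd(P,Q) = 1$: it is used only in the backward direction, and without it that implication genuinely fails (for instance $P = Q = 2$ with $x = 2$ gives $P \mid x$ and $Q \mid x$ yet $PQ = 4 \nmid 2$). I would therefore make explicit in the write-up that it is exactly this hypothesis that licenses the appeal to Theorem~\ref{product-of-divisors}, so the reader sees where the structure of $N$ as a product of \emph{coprime} factors is actually consumed.
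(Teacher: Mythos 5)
Your proposal is correct and follows essentially the same route as the paper, which likewise reduces each congruence to a divisibility statement about $a-b$ via Definition~\ref{def-congruence} and then invokes Theorem~\ref{product-of-divisors} for the nontrivial direction; the paper merely sketches this in one sentence while you supply the details. Your explicit remark on where $\gcd(P,Q)=1$ is consumed, and the counterexample $P=Q=2$, is a worthwhile addition but not a departure in method.
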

\begin{proof}{Proof}{}{subsection-2-3-5}
Based on the fact that%
\begin{equation*}
a \equiv b \pmod{N} \text{ if and only if } N|(a-b)
\end{equation*}
and \hyperref[product-of-divisors]{Theorem~{\xreffont\ref{product-of-divisors}}, p.\,\pageref{product-of-divisors}}, we can easily prove the theorem. \(\square\)%
\end{proof}
This theorem is particularly useful in RSA, where we often work with products of primes and their multiples. It allows us to break down congruences modulo a composite number into simpler congruences modulo its prime factors. This fact is essntial for the proof of correctness of RSA, as it enables us to analyze the encryption and decryption processes separately for each prime factor of the modulus \(N\).%
\end{subsectionptx}
\end{sectionptx}
\typeout{************************************************}
\typeout{Section 3 \(\phi\text{-set}\) and \(\phi\) (Euler's Totient) Function}
\typeout{************************************************}
\begin{sectionptx}{Section}{\(\phi\text{-set}\) and \(\phi\) (Euler's Totient) Function}{}{\(\phi\text{-set}\) and \(\phi\) (Euler's Totient) Function}{}{}{section-3}
\begin{introduction}{}%
In this section, we introduce Euler's Totient Function, which is an important concept in number theory and plays a significant role in the RSA encryption scheme.%
\end{introduction}%
\begin{definition}{Definition}{\(\phi\text{-set}\).}{def-phi-set}%
The \(\phi\text{-set}\) of a positive integer \(N\), denoted as \(\phi\text{-set}(N)\), is the set of all integers from 1 to \(N\) that are coprime to \(N\).%
\end{definition}
For example, if \(N = 10\), then \(\phi\text{-set}(10)\) = \textbraceleft{}1, 3, 7, 9\textbraceright{}, since these are exactly the integers between 1 and 10 that are coprime to 10. In the literature, the \(\phi\text{-set}(N)\) is commonly called the reduced residue system modulo \(N\). It can be verified \(\phi\text{-set}(N)\) forms a group under multiplication modulo \(N\). We use the notation \(\phi\text{-set}(N)\) because we will later extend this set to a bigger set called \(\Phi\text{-set}(N)\), which plays an essential role in the correctness of the RSA encryption scheme.%
\begin{definition}{Definition}{Euler's Totient Function.}{def-euler-totient}%
Euler's Totient Function of a positive integer \(N\), denoted as \(\phi(N)\), is defined as the number of integers from 1 to \(N\) that are coprime to \(N\). As a result \(\phi(N)\) is the number of elements in the \(\phi\text{-set}(N)\).%
\end{definition}
\begin{theorem}{Theorem}{Computing Euler's Totient Function.}{}{Euler-Totient-Theorem}%
\begin{enumerate}
\item{}If \(p\) is a prime number, then \(\phi(p) = p - 1\).%
\item{}If \(p_1, p_2, \ldots, p_k\) are distinct prime numbers, then%
\begin{equation*}
\phi(p_1 p_2 \cdots p_k) = (p_1 - 1)(p_2 - 1) \cdots (p_k - 1)\text{.}
\end{equation*}
\item{}If \(n\) is a positive integer with the prime factorization \(n = p_1^{k_1} p_2^{k_2} \cdots p_m^{k_m}\), then%
\begin{equation*}
\phi(n) = n \prod_{i=1}^{m} \left( 1 - \frac{1}{p_i} \right)\text{.}
\end{equation*}
\end{enumerate}
\end{theorem}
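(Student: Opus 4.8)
The plan is to prove the three parts in sequence, deriving parts (2) and (3) from a multiplicativity property of \(\phi\) that I establish first. Part (1) is immediate from the definition of the \(\phi\text{-set}\): since \(p\) is prime, its only positive divisors are \(1\) and \(p\), so every one of \(1, 2, \ldots, p-1\) is coprime to \(p\) while \(p\) itself is not, and counting gives \(\phi(p) = p - 1\).

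The crux of the theorem is the multiplicative step: if \(\gcd(P,Q)=1\) then \(\phi(PQ) = \phi(P)\phi(Q)\). I would prove this by exhibiting a bijection. Consider the map sending \(m \in \{1, \ldots, PQ\}\) to the pair \((m \bmod P,\, m \bmod Q)\). Theorem~\ref{congruence-theorem} shows this map is injective, since two integers mapping to the same pair are congruent modulo \(P\) and modulo \(Q\), hence modulo \(PQ\), hence equal in this range; because the domain and the codomain \(\{0,\ldots,P-1\}\times\{0,\ldots,Q-1\}\) both have exactly \(PQ\) elements, injectivity forces a bijection. By Theorem~\ref{product_of_gcd}(2) we have \(\gcd(m,PQ)=\gcd(m,P)\,\gcd(m,Q)\), so \(m\) is coprime to \(PQ\) precisely when it is coprime to both \(P\) and \(Q\); under the bijection these \(m\) correspond exactly to the pairs whose first coordinate is coprime to \(P\) and whose second is coprime to \(Q\). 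Counting such pairs yields \(\phi(P)\phi(Q)\). Part (2) then follows by a straightforward induction on \(k\), using the pairwise coprimality of distinct primes together with part (1).

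For part (3), I would first compute \(\phi(p^k)\) for a single prime power: the integers in \(\{1,\ldots,p^k\}\) that fail to be coprime to \(p^k\) are exactly the multiples of \(p\), namely \(p, 2p, \ldots, p^{k-1}\cdot p\), of which there are \(p^{k-1}\); hence \(\phi(p^k) = p^k - p^{k-1} = p^k\bigl(1 - 1/p\bigr)\). Since the prime powers \(p_1^{k_1}, \ldots, p_m^{k_m}\) are pairwise coprime, iterating the multiplicative step gives \(\phi(n) = \prod_{i=1}^m \phi(p_i^{k_i}) = \prod_{i=1}^m p_i^{k_i}\bigl(1 - 1/p_i\bigr) = n\prod_{i=1}^m\bigl(1 - 1/p_i\bigr)\).

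I expect the main obstacle to be the multiplicativity step, and specifically the justification that the residue map is a bijection rather than merely injective. The injectivity is handed to us by Theorem~\ref{congruence-theorem}, but I must invoke the equal-cardinality (pigeonhole) principle to upgrade it to a bijection, because the existence half of the Chinese Remainder Theorem is not proven in the excerpt. Everything else reduces to counting divisors and to the gcd identity already available in Theorem~\ref{product_of_gcd}.
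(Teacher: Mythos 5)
Your proof is correct, but there is nothing in the paper to compare it against: the paper states this theorem without any proof at all, treating it as a standard fact about Euler's totient function (only the subsequent Corollary, the special case \(\phi(PQ)=\phi(P)\phi(Q)\) for coprime \(P,Q\), is even singled out, and it too is left unproven). Your argument supplies exactly the missing content, and it does so using only tools the paper actually provides: Theorem~\ref{congruence-theorem} for injectivity of the residue map, the pigeonhole upgrade to a bijection (correctly flagged as necessary since the existence half of the Chinese Remainder Theorem is not available), and Theorem~\ref{product_of_gcd}(2) to characterize coprimality to \(PQ\). The prime-power count \(\phi(p^k)=p^k-p^{k-1}\) and the two inductions are routine and correct. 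The only step you leave implicit is that \(m\) is coprime to \(P\) if and only if its residue \(m \bmod P\) is coprime to \(P\) (i.e.\ \(\gcd(m,P)=\gcd(m \bmod P, P)\)), which you need in order to count coprime pairs in the codomain rather than in the domain; this is standard and harmless, but worth a sentence in a fully written-out version. In short: the proposal is a complete and correct proof of a result the paper merely asserts.
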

\begin{corollary}{Corollary}{}{}{corollary-euler-totient}%
If \(N = P \times Q\), where gcd(P, Q) = 1, then \(\phi(N) = \phi(P) \cdot \phi(Q)\).%
\end{corollary}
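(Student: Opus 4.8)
The plan is to derive the corollary directly from the product formula for \(\phi\) established in part (3) of the preceding theorem. First I would invoke the Fundamental theorem of Arithmetic to write \(P = p_1^{a_1}\cdots p_r^{a_r}\) and \(Q = q_1^{b_1}\cdots q_s^{b_s}\), and observe that because \(\gcd(P,Q)=1\) the two lists of primes \(\{p_i\}\) and \(\{q_j\}\) are disjoint. Consequently \(N = PQ = p_1^{a_1}\cdots p_r^{a_r} q_1^{b_1}\cdots q_s^{b_s}\) is itself the prime factorization of \(N\), with no merging of exponents occurring.

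With the factorization of \(N\) in hand, I would apply the product formula to each of \(\phi(N)\), \(\phi(P)\), and \(\phi(Q)\) and compare. Writing
\[
\phi(N) = N \prod_{i=1}^{r}\Bigl(1-\tfrac{1}{p_i}\Bigr)\prod_{j=1}^{s}\Bigl(1-\tfrac{1}{q_j}\Bigr),
\]
the substitution \(N = PQ\) lets me regroup the factors as \(\bigl[P\prod_i(1-1/p_i)\bigr]\bigl[Q\prod_j(1-1/q_j)\bigr]\), and each bracket is precisely \(\phi(P)\) and \(\phi(Q)\) by that same formula. This yields \(\phi(N)=\phi(P)\cdot\phi(Q)\).

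The only point requiring genuine care — the nearest thing to an obstacle — is the disjointness of the prime supports of \(P\) and \(Q\); everything after it is mechanical regrouping. Disjointness follows at once because any shared prime would divide both \(P\) and \(Q\) and contradict \(\gcd(P,Q)=1\), so the step is immediate once articulated.

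As an alternative route that leans on the earlier structural results, I could instead count the coprime residues directly. By part (2) of the gcd-product theorem, \(\gcd(m,N)=\gcd(m,P)\cdot\gcd(m,Q)\), so \(m\) is coprime to \(N\) if and only if it is coprime to both \(P\) and \(Q\); combining this with the congruence theorem (a Chinese-Remainder-style bijection between residues modulo \(N\) and pairs of residues modulo \(P\) and modulo \(Q\)) identifies \(\phi\text{-set}(N)\) with \(\phi\text{-set}(P)\times\phi\text{-set}(Q)\) and gives the count \(\phi(P)\cdot\phi(Q)\). In that route the real labor would shift to establishing the bijection and verifying that coprimality is preserved on both sides, which is why I would favor the short formula-based argument above for a corollary.
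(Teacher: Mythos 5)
Your first argument is correct and is exactly the derivation the paper intends: the corollary is stated immediately after the theorem giving \(\phi(n) = n\prod_i(1-1/p_i)\), with no written proof, and your observation that \(\gcd(P,Q)=1\) forces disjoint prime supports so the factors regroup as \(\phi(P)\cdot\phi(Q)\) fills that gap in the standard way. The alternative counting route you sketch is also sound but unnecessary here; the formula-based argument suffices.
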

According to the above theorem, we can design an algorithm to compute the \(\phi(N)\) function by using the prime factorization of \(N\). Consequently, the problem of computing \(\phi(N)\) is reduced to the challenge of finding the prime factorization of \(N\). The security of the RSA encryption scheme relies on the computational difficulty of this factorization problem..%
\end{sectionptx}
\typeout{************************************************}
\typeout{Section 4 \(\Phi\text{-set}\) and \(\Phi\) Function}
\typeout{************************************************}
\begin{sectionptx}{Section}{\(\Phi\text{-set}\) and \(\Phi\) Function}{}{\(\Phi\text{-set}\) and \(\Phi\) Function}{}{}{section-4}
\begin{introduction}{}%
In this section, we define the \(\Phi\text{-set}\) of an positive integer \(N\), which is an extension or a super set of \(\phi\text{-set}\) of \(N\) and plays an essential role in the correctness of the RSA encryption scheme.%
\end{introduction}%
\begin{definition}{Definition}{\(\Phi\text{-set}\).}{def-big-phi-set}%
The \(\Phi\text{-set}\) of a positive integer \(N\), denoted as \(\Phi\text{-set}(N)\), is the set of all integers \(m, 1 \leq m \leq N\) that satisfy the following condition:%
\par
\begin{equation*}
\text{if } P = \gcd(m, N) \text{ and } Q = N \div P \text{ then } \gcd(P, Q) = 1\text{.}
\end{equation*}
Or equivalently,%
\begin{equation*}
\Phi\text{-set}(N) = \{m, 1 \leq m \leq N \mid \gcd(P,Q) = 1, \text{where } P = \gcd(m, N) \text{ and } Q = N \div P\}
\end{equation*}
\end{definition}
For example, if \(N = 10\), then \(\Phi\text{-set}(10)\) = \textbraceleft{}1,2,3,4,5,6,7,8,9,10\textbraceright{}. We can easily verify the result manually. Or we can use the theoream \hyperref[Phi-Set-Theorem]{Theorem~{\xreffont\ref{Phi-Set-Theorem}}, p.\,\pageref{Phi-Set-Theorem}} for \(10 = 2 \ldots 5\) to prove the result.%
\begin{remark}{Remark}{}{section-4-5}%
For \(m \in \Phi\text{-set}(N)\), from the definition and \hyperref[product_of_gcd]{Theorem~{\xreffont\ref{product_of_gcd}}, p.\,\pageref{product_of_gcd}}, it is clear that \(gcd(m, Q) = 1\).%
\end{remark}
The \(\Phi\) function of a positive integer \(N\), denoted as \(\Phi(N)\), is defined as the number of integers in \(\Phi\text{-set}(N)\).%
\par
It is clear that \(\Phi\text{-set}(N)\) is a super set of \(\phi\text{-set}(N)\), i.e., \(\Phi\text{-set}(N) \supseteq \phi\text{-set}(N)\). As a result, we have \(\Phi(N) \geq \phi(N)\).%
\begin{theorem}{Theorem}{Computing \(\Phi\) Function.}{}{Phi-Set-Theorem}%
For an positive integer \(N\) if :%
\begin{equation*}
N = p_1 p_2 \cdots p_k
\end{equation*}
where \(p_1,p_2,\ldots,p_k\) are distinct prime numbers, then \(\Phi(N) = N\), i.e. the \(\Phi\text{-set}(N)\) contains all integers from 1 to \(N\).%
\end{theorem}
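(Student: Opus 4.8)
The plan is to show directly that every integer $m$ in the range $1 \leq m \leq N$ lies in $\Phi\text{-set}(N)$; since there are exactly $N$ such integers, this immediately gives $\Phi(N) = N$. Fixing an arbitrary such $m$, I set $P = \gcd(m, N)$ and $Q = N \div P$ as in the definition, so that the entire task reduces to verifying the single condition $\gcd(P, Q) = 1$.

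The key structural observation is that $N = p_1 p_2 \cdots p_k$ is squarefree: every prime exponent equals $1$. Because $P = \gcd(m, N)$ necessarily divides $N$, the Fundamental theorem of Arithmetic tells me that $P = p_1^{m_1} \cdots p_k^{m_k}$ with $0 \leq m_i \leq 1$ for each $i$. Hence each exponent $m_i$ is either $0$ or $1$, so $P$ is exactly the product of the primes in some subset $S \subseteq \{p_1, \ldots, p_k\}$.

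Next I would compute $Q = N \div P$ as the product of precisely the complementary primes $\{p_1, \ldots, p_k\} \setminus S$. Since $P$ and $Q$ are built from disjoint collections of primes, they share no common prime factor, and therefore $\gcd(P, Q) = 1$ --- which is exactly the membership condition for $\Phi\text{-set}(N)$. The boundary cases are automatic: when $m$ is coprime to $N$ we get $P = 1$, and when $N \mid m$ we get $P = N$ and $Q = 1$; in both situations the gcd is trivially $1$.

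I do not anticipate a serious obstacle, since squarefreeness forces the divisor $P$ and its cofactor $Q$ to have disjoint prime supports almost by inspection. The only point deserving care is the justification that a divisor of a squarefree integer is itself squarefree and shares no prime with its complementary cofactor; this is where the hypothesis that the $p_i$ are \emph{distinct} (equivalently, that all exponents equal $1$) is essential, and it is precisely the step that would fail if $N$ had a repeated prime factor.
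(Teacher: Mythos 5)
Your proposal is correct and follows essentially the same route as the paper's own proof: fix an arbitrary $m$ with $1 \leq m \leq N$, observe that $P = \gcd(m,N)$ is a divisor of the squarefree integer $N$ and hence a product of a subset of the primes $p_1,\ldots,p_k$, so that $Q = N \div P$ is the product of the complementary primes and $\gcd(P,Q)=1$. Your added remarks on the boundary cases and on where squarefreeness is essential are accurate elaborations of the same argument.
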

\begin{proof}{Proof}{}{Phi-Set-Theorem-3}
Let \(m\) be any integer such that \(1 \leq m \leq N\). We need to show that \(m \in \Phi\text{-set}(N)\). Let \(P = \gcd(m, N)\) and \(Q = N \div P\). Since \(N\) is a product of distinct primes and \(P\) is a divisor of \(N\), \(P\) must be a product of a subset of the prime factors of \(N\). Therefore, \(Q = N \div P\) is the product of the remaining prime factors of \(N\). Cosequently, \(P\) and \(Q\) do not share any common prime factors, which implies that \(\gcd(P, Q) = 1\) and as a result \(m \in \Phi\text{-set}(N)\). This concludes the proof.%
\end{proof}
\end{sectionptx}
\typeout{************************************************}
\typeout{Section 5 Euler Theorem}
\typeout{************************************************}
\begin{sectionptx}{Section}{Euler Theorem}{}{Euler Theorem}{}{}{section-5}
\begin{introduction}{}%
In this section, we state the well-known Euler’s Theorem, a fundamental result in number theory related to modular arithmetic. This theorem is essential for understanding the RSA encryption scheme and its generalizations.%
\end{introduction}%
\begin{theorem}{Theorem}{Euler's Theorem.}{}{Euler-Theorem}%
For any two positive integers \(m \text{ and } N\), if \(\gcd(m,N) = 1\), then \(m^{\phi(n)} \equiv 1 \pmod N\).%
\end{theorem}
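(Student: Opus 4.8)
The plan is to prove Euler's Theorem by the classical product-permutation argument over the reduced residue system, that is, over the $\phi\text{-set}(N)$ introduced in Section~\ref{section-3}. First I would enumerate the elements of $\phi\text{-set}(N)$ as $r_1, r_2, \ldots, r_{\phi(N)}$, these being exactly the integers in $\{1, \ldots, N\}$ that are coprime to $N$.

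Next I would study the map sending each $r_i$ to $m r_i \bmod N$ and show that it permutes $\phi\text{-set}(N)$. Two facts are needed. First, each product $m r_i$ is again coprime to $N$: since $\gcd(m,N)=1$ and $\gcd(r_i,N)=1$, no prime factor of $N$ divides $m$ or $r_i$, hence none divides $m r_i$, so $\gcd(m r_i, N)=1$ (this also follows by applying part (1) of Theorem~\ref{product_of_gcd}) and $m r_i$ reduces modulo $N$ to some element of $\phi\text{-set}(N)$. Second, the map is injective: if $m r_i \equiv m r_j \pmod N$ then $N \mid m(r_i - r_j)$, and because $\gcd(m,N)=1$ this forces $N \mid (r_i - r_j)$, i.e. $r_i \equiv r_j \pmod N$. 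An injective self-map of a finite set is a bijection, so modulo $N$ the list $m r_1, \ldots, m r_{\phi(N)}$ is a rearrangement of $r_1, \ldots, r_{\phi(N)}$.

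With the permutation in hand I would multiply all the residues on each side. Combining the congruences gives
\[
\prod_{i=1}^{\phi(N)} (m r_i) \equiv \prod_{i=1}^{\phi(N)} r_i \pmod N,
\]
and pulling out the $\phi(N)$ copies of $m$ yields
\[
m^{\phi(N)} \prod_{i=1}^{\phi(N)} r_i \equiv \prod_{i=1}^{\phi(N)} r_i \pmod N.
\]
Setting $R = \prod_{i=1}^{\phi(N)} r_i$, each factor is coprime to $N$, so $R$ is coprime to $N$ by iterating part (1) of Theorem~\ref{product_of_gcd}. Hence $R$ may be cancelled: from $N \mid R\,(m^{\phi(N)} - 1)$ and $\gcd(R,N)=1$ we conclude $N \mid (m^{\phi(N)} - 1)$, which is exactly $m^{\phi(N)} \equiv 1 \pmod N$.

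The main obstacle is the cancellation law that underlies both the injectivity of multiplication by $m$ and the final removal of $R$: namely that $\gcd(a,N)=1$ together with $N \mid ab$ forces $N \mid b$. This is the single point where coprimality is genuinely used, and it is most cleanly justified through the prime factorization of $N$ (Theorem~\ref{Factorization-Theorem}), since each prime power dividing $N$ cannot divide $a$ and must therefore divide $b$. I would isolate this as the key lemma and invoke it at both places rather than re-deriving it inline. I note that if one takes for granted the remark that $\phi\text{-set}(N)$ is a group under multiplication modulo $N$, the bijection step is immediate (translation in a group) and the theorem reduces to a Lagrange-type statement; I prefer the elementary argument above so that the proof remains self-contained.
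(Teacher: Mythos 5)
Your proof is correct, but it follows a genuinely different route from the paper. The paper's proof leans on the remark in Section~3 that \(\phi\text{-set}(N)\) forms a group under multiplication modulo \(N\): it invokes the fact that the order of \(m\) divides the group order \(\phi(N)\) (a Lagrange-type argument), writes \(\phi(N) = k\cdot \mathrm{order}(m)\), and concludes \(m^{\phi(N)} = (m^{\mathrm{order}(m)})^k \equiv 1 \pmod N\). You instead give the classical product-permutation argument: multiplication by \(m\) permutes the reduced residues, so the product of all residues absorbs a factor of \(m^{\phi(N)}\) that can then be cancelled. You even anticipate the paper's approach in your closing remark, noting that granting the group structure collapses the bijection step into a translation argument. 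The trade-off is real: the paper's proof is shorter but silently relies on both the closure/inverse axioms for \(\phi\text{-set}(N)\) and Lagrange's theorem, neither of which is proved in the paper; your version is longer but self-contained modulo a single cancellation lemma (\(\gcd(a,N)=1\) and \(N\mid ab\) imply \(N\mid b\)), which you correctly isolate and which is easily justified from the Fundamental Theorem of Arithmetic already stated in Section~2. Your application of part (1) of the gcd-product theorem to get \(\gcd(N, m r_i) \le \gcd(N,m)\gcd(N,r_i) = 1\) is a nice reuse of the paper's own machinery.
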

\begin{proof}{Proof}{}{Euler-Theorem-3}
The proof of Euler's Theorem is equvalant to prove for any \(m \in \phi\text{-set}(N)\),%
\begin{equation*}
m^{\phi(n)} \equiv 1 \pmod N\text{.}
\end{equation*}
\par
Since \(\phi\text{-set}(N)\) forms a group under multiplication modulo \(N\), for any \(m \in \phi\text{-set}(N)\), the cyclic group generated by \(m\) is a subgroup of \(\phi\text{-set}(N)\). Therefore, the order of \(m\), denoted by \(order(m)\), is a divisor of \(\phi(N)\), which is the order of the group \(\phi\text{-set}(N)\).%
\par
Therefore, we have%
\par
\begin{equation*}
m^{order(m)} \equiv 1 \pmod N
\end{equation*}
and%
\begin{equation*}
\phi(N) = k \times order(m), \text{ where } k \text{ is a positive integer.}
\end{equation*}
Combining the above results, we get%
\begin{equation*}
m^{\phi(N)} = m^{k \times order(m)} = (m^{order(m)})^k \equiv 1^k \equiv 1 \pmod N\text{.}
\end{equation*}
\end{proof}
\end{sectionptx}
\typeout{************************************************}
\typeout{Section 6 Extended RSA Encryption Scheme}
\typeout{************************************************}
\begin{sectionptx}{Section}{Extended RSA Encryption Scheme}{}{Extended RSA Encryption Scheme}{}{}{section-6}
\begin{introduction}{}%
In this section, we define the extended RSA encryption scheme and prove its correctness by demonstrating that the decryption process correctly recovers the original plaintext message.%
\end{introduction}%
\begin{definition}{Definition}{Extended RSA Encryption Scheme.}{def-extended-rsa}%
For any positive integer \(N\), the extended RSA encryption scheme is defined as follows: choose \(e\) and \(d\) such that \(1 \leq e, d \leq \phi(N)\) and \(e \times d \equiv 1 \pmod{\phi(N)}\), then the encryption function is defined as:%
\par
For any \(m\), \(1 \leq m \leq N\), \(\text{Enc}(m) = m^e \pmod{N}\)%
\par
and the decryption function is defined as:%
\par
For any \(c\), \(1 \leq c \leq N\), \(\text{Dec}(c) = c^d \pmod{N}\)%
\end{definition}
Like the classical RSA encryption scheme, the correctness condition for the extended RSA encryption scheme is formulated as:%
\par
\begin{equation*}
(m^e)^d \equiv m \pmod{N} \text{ for any integer } m, 1 \leq m \leq N\text{.}
\end{equation*}
\par
The main difference between the extended RSA encryption scheme and the classical RSA encryption scheme is that the extended RSA scheme permits a wider choices of \(N\). The main result of this paper shows the set of messages satisfying the correctness condition of the extended RSA encryption scheme is exactly the \(\Phi\text{-set}(N)\). This main result is proved in the next Theorem.%
\begin{theorem}{Theorem}{Correctness of Extended RSA Encryption Scheme.}{}{extended-rsa-correctness}%
For any positive integer \(N\), the correctness condition of the extended RSA encryption scheme holds for all integers in \(\Phi\text{-set}(N)\).%
\par
In other words, for any integer \(m\) in \(\Phi\text{-set}(N)\), we have:%
\begin{equation*}
(m^e)^d \equiv m \pmod{N}\text{.}
\end{equation*}
\end{theorem}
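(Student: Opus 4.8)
The plan is to exploit the very decomposition that is built into the definition of \(\Phi\text{-set}(N)\). Given \(m \in \Phi\text{-set}(N)\), I would set \(P = \gcd(m, N)\) and \(Q = N \div P\), so that \(N = PQ\) and, by the defining property of the \(\Phi\text{-set}\), \(\gcd(P, Q) = 1\). Writing the relation \(ed \equiv 1 \pmod{\phi(N)}\) as \(ed = 1 + k\,\phi(N)\) for some non-negative integer \(k\), the goal \((m^e)^d = m^{ed} \equiv m \pmod{N}\) can then be attacked one factor at a time: since \(\gcd(P,Q)=1\), Theorem~\ref{congruence-theorem} reduces the claim to establishing
\[
m^{ed} \equiv m \pmod{P} \quad\text{and}\quad m^{ed} \equiv m \pmod{Q}.
\]

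For the factor \(P\) the argument is immediate: because \(P = \gcd(m,N)\) divides \(m\), we have \(m \equiv 0 \pmod{P}\), and since \(ed \geq 1\) the same holds for \(m^{ed}\); hence both sides are congruent to \(0\) modulo \(P\), and the first congruence is satisfied trivially.

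The factor \(Q\) is where the real work — and the reason the \(\Phi\text{-set}\) is the correct object — lies. By the Remark following Definition~\ref{def-big-phi-set} (see \ref{section-4-5}), \(\gcd(m, Q) = 1\), so Euler's Theorem (Theorem~\ref{Euler-Theorem}) gives \(m^{\phi(Q)} \equiv 1 \pmod{Q}\). The key step is to recognize that the exponent \(k\,\phi(N)\) is a multiple of \(\phi(Q)\): since \(\gcd(P,Q)=1\), Corollary~\ref{corollary-euler-totient} yields \(\phi(N) = \phi(P)\,\phi(Q)\), so \(\phi(Q) \mid \phi(N)\). Consequently
\[
m^{ed} = m^{1 + k\phi(N)} = m \cdot \bigl(m^{\phi(Q)}\bigr)^{k\phi(P)} \equiv m \cdot 1 \equiv m \pmod{Q}.
\]
I expect the multiplicativity \(\phi(N) = \phi(P)\phi(Q)\) to be the main obstacle to handle carefully, precisely because it fails without coprimality; it is the single hypothesis \(\gcd(P,Q)=1\) — guaranteed exactly for \(m \in \Phi\text{-set}(N)\) — that makes both the congruence splitting of Theorem~\ref{congruence-theorem} and the totient identity available at once.

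Finally, combining the two established congruences through Theorem~\ref{congruence-theorem} gives \(m^{ed} \equiv m \pmod{N}\), which is precisely the correctness condition. The degenerate cases are absorbed without separate treatment: when \(P = 1\) we have \(\gcd(m,N)=1\) and Euler's Theorem applies directly modulo \(N\), while when \(P = N\) we have \(Q = 1\) and the modulus-\(Q\) congruence is vacuous.
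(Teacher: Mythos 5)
Your proposal is correct and follows essentially the same route as the paper's own proof: the decomposition $N = PQ$ with $P = \gcd(m,N)$, the trivial congruence modulo $P$, Euler's Theorem together with $\phi(N) = \phi(P)\phi(Q)$ modulo $Q$, and recombination via Theorem~\ref{congruence-theorem}. Your explicit justification that $\gcd(m,Q)=1$ via the Remark and your handling of the degenerate cases $P=1$ and $P=N$ are slightly more careful than the paper's version, but the argument is the same.
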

\begin{proof}{Proof}{}{extended-rsa-correctness-3}
By definition, for any integer \(m\) in \(\Phi\text{-set}(N)\), we have:%
\begin{equation*}
P = gcd(m,N), Q = N \div P, \text{ and }, gcd(P,Q) = 1.
\end{equation*}
We need to prove for such \(m\) that:%
\begin{equation*}
(m^e)^d \equiv m \pmod{N}
\end{equation*}
\par
Since \(e \times d \equiv 1 \pmod{\phi(N)}\), we can express \(e \times d\) as:%
\begin{equation*}
e \times d = k \times \phi(N) + 1
\end{equation*}
for some integer \(k\).%
\par
Now, we can rewrite \((m^e)^d\) as:%
\begin{equation*}
(m^e)^d = m^{k \times \phi(N) + 1}
\end{equation*}
and we can easily verify that:%
\begin{equation*}
m^{k \times \phi(N) + 1} \equiv m \pmod{P}
\end{equation*}
(both sides are divisible by \(P\) due to \(P = gcd(m,N)\)).%
\par
Next, we want to show that \(m^{k \times \phi(N) + 1} \equiv m \pmod{Q}\). Since \(gcd(P,Q) = 1\), \(\phi(N) = \phi(P) \cdot \phi(Q)\) holds and we can rewrite \(m^{k \times \phi(N) + 1}\) as:%
\begin{equation*}
m^{k \times \phi(P) \cdot \phi(Q) + 1} = m^{k \times \phi(P) \cdot \phi(Q)} \cdot m\text{.}
\end{equation*}
\par
Since \(gcd(P,Q) = 1\), which implies \(gcd(m,Q) = 1\) (because \(m \in \Phi\text{-set}(N)\)), we can apply Euler's theorem, which states that if \(gcd(m,Q) = 1\), then:%
\begin{equation*}
m^{\phi(Q)} \equiv 1 \pmod{Q}\text{.}
\end{equation*}
From this we can conclude that:%
\begin{equation*}
m^{k \times \phi(P) \cdot \phi(Q) + 1} \equiv m \pmod{Q}\text{.}
\end{equation*}
\par
Summarize what we have proved:%
\begin{equation*}
(m^e)^d \equiv m \pmod{P}
\end{equation*}
and%
\begin{equation*}
(m^e)^d \equiv m \pmod{Q}\text{.}
\end{equation*}
Since \(gcd(P,Q) = 1\) and \(N = P \times Q\), by \hyperref[congruence-theorem]{Theorem~{\xreffont\ref{congruence-theorem}}, p.\,\pageref{congruence-theorem}}, we have:%
\begin{equation*}
(m^e)^d \equiv m \pmod{N}\text{.}
\end{equation*}
This completes the proof of the theorem.%
\end{proof}
\begin{corollary}{Corollary}{.}{}{corollary-extended-rsa}%
If \(N\) is a positive integer and%
\begin{equation*}
N = p_1 p_2 \cdots p_k
\end{equation*}
where%
\begin{equation*}
p_1, p_2, \ldots, p_k
\end{equation*}
are distinct prime numbers, then the correctness condition of the RSA encryption scheme holds for all \(m\), \(1 \leq m \leq N\).%
\end{corollary}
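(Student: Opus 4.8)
The plan is to obtain this corollary as an immediate consequence of the two main results already established, namely the general correctness theorem (Theorem~\ref{extended-rsa-correctness}) and the computation of the \(\Phi\)-set for square-free moduli (Theorem~\ref{Phi-Set-Theorem}). The guiding observation is that the hypothesis \(N = p_1 p_2 \cdots p_k\) with distinct primes is precisely the condition under which \(\Phi\text{-set}(N)\) swells to fill the entire message space, so that the set of messages already known to decrypt correctly coincides with \(\{1, 2, \ldots, N\}\). No new computation is needed; the corollary is a pure specialization.

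First I would invoke Theorem~\ref{Phi-Set-Theorem}. Since \(N\) is a product of distinct primes, that theorem yields \(\Phi(N) = N\), equivalently \(\Phi\text{-set}(N) = \{1, 2, \ldots, N\}\). In other words, there are no ``bad'' messages to exclude: every integer \(m\) with \(1 \leq m \leq N\) lies in the \(\Phi\)-set.

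Next I would apply Theorem~\ref{extended-rsa-correctness}, which asserts that the correctness condition \((m^e)^d \equiv m \pmod{N}\) holds for every \(m \in \Phi\text{-set}(N)\). Combining this with the set equality from the previous step, I conclude that correctness holds for all \(m\) with \(1 \leq m \leq N\), which is exactly the assertion of the corollary.

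There is no substantial obstacle here, since all of the genuine work — the reduction to the coprime factors \(P\) and \(Q\) via the Chinese-remainder-type Theorem~\ref{congruence-theorem}, together with the application of Euler's Theorem to obtain \(m^{\phi(Q)} \equiv 1 \pmod{Q}\) — has already been carried out inside the proof of Theorem~\ref{extended-rsa-correctness}. The only point requiring (minor) care is to confirm that the range of \(m\) in the two invoked theorems matches the range \(1 \leq m \leq N\) in the corollary, so that the substitution \(\Phi\text{-set}(N) = \{1, \ldots, N\}\) may be made without any boundary discrepancy.
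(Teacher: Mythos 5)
Your proposal is correct and follows exactly the paper's own route: the paper also derives the corollary directly from Theorem~\ref{Phi-Set-Theorem} (which gives \(\Phi\text{-set}(N) = \{1,\ldots,N\}\) for square-free \(N\)) combined with Theorem~\ref{extended-rsa-correctness}. Your additional remark about checking that the ranges of \(m\) agree is a harmless extra precaution.
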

This corollary follows directly from \hyperref[extended-rsa-correctness]{Theorem~{\xreffont\ref{extended-rsa-correctness}}, p.\,\pageref{extended-rsa-correctness}} and \hyperref[Phi-Set-Theorem]{Theorem~{\xreffont\ref{Phi-Set-Theorem}}, p.\,\pageref{Phi-Set-Theorem}}.%
\par
In the classical RSA algorithm, where \(N = p \times q\) with \(p\) and \(q\) distinct primes, this corollary establishes the correctness of the classical RSA encryption scheme.%
\begin{example}{Example}{Example of classical RSA Encryption Scheme.}{section-6-11}%
Consider \(N = 10 = 2 \times 5\). We can easily compute:%
\begin{equation*}
\phi\text{-set}(N) = \{1,3,7,9\}
\end{equation*}
and%
\begin{equation*}
\Phi\text{-set}(N) = \{1,2,3,4,5,6,7,8,9,10\}\text{.}
\end{equation*}
If we choose \(e = 3\) and \(d = 7\) (since \(3 \times 7 \equiv 1 \pmod{4}\) and \(\phi(10) = 4\)) as the keys for the RSA encryption scheme, then from the corollary, it follows the correctness condition holds for all \(m\), \(1 \leq m \leq 10\).%
\end{example}
\begin{example}{Example}{Example of extended RSA Encryption Scheme.}{section-6-12}%
Consider \(N = 20 = 2^2 \times 5 \). We can easily compute:%
\begin{equation*}
\phi\text{-set}(N) = \{1,3,7,9,11,13,17,19\}
\end{equation*}
and%
\begin{equation*}
\Phi\text{-set}(N) = \{1,3,4,5,7,8,9,11,12,13,15,16,17,19,20\}\text{.}
\end{equation*}
If we choose \(e = 3\) and \(d = 3\) (since \(3 \times 3 \equiv 1 \pmod{8}\) and \(\phi(20) = 8\)), as the keys for the RSA encryption scheme, then we can verify for any \(m\) in \(\Phi\text{-set}(20)\) that the correctness condition holds. However, \textbraceleft{}2,6,10,14,18\textbraceright{} are not in \(\Phi\text{-set}(20)\) and we can verify that all of those integers do not satisfy the correctness condition. In fact,%
\begin{equation*}
(2^3)^3 \equiv 12 \pmod{20}\text{,}
\end{equation*}
\begin{equation*}
(6^3)^3 \equiv 16 \pmod{20}\text{,}
\end{equation*}
\begin{equation*}
(10^3)^3 \equiv 0 \pmod{20}\text{,}
\end{equation*}
\begin{equation*}
(14^3)^3 \equiv 4 \pmod{20}\text{,}
\end{equation*}
\begin{equation*}
(18^3)^3 \equiv 8 \pmod{20}\text{.}
\end{equation*}
None of these results equals the original plaintext message. This example demonstrates that, for the given choices of \(N, e, \text{ and } d\), the messages for which the extended RSA encryption scheme is correct are exactly those in \(\Phi\text{-set}(N)\).%
\end{example}
\end{sectionptx}
\typeout{************************************************}
\typeout{Section 7 Future Work}
\typeout{************************************************}
\begin{sectionptx}{Section}{Future Work}{}{Future Work}{}{}{section-7}
\begin{introduction}{}%
In this section, we will discuss future work related to the generalized RSA encryption scheme.%
\end{introduction}%
\typeout{************************************************}
\typeout{Subsection 7.1 Necessary Condition of the Correctness \(\Phi\text{-set}(N)\)}
\typeout{************************************************}
\begin{subsectionptx}{Subsection}{Necessary Condition of the Correctness \(\Phi\text{-set}(N)\)}{}{Necessary Condition of the Correctness \(\Phi\text{-set}(N)\)}{}{}{subsection-7-1}
We have proved that all messages in \(\Phi\text{-set}(N)\) satisfy the correctness condition of the generalized RSA encryption scheme. However, we did not prove that it is necessary that messages must be in \(\Phi\text{-set}(N)\) to satisfy the correctness condition. Example 6.5 in section 6 demonstrates that the necessary condition for the correctness of the generalized RSA encryption scheme is true. Moreover, we have performed many computational experiments for various values of \(N\) and \(e\) (except \(e = 1\)) to verify that the necessary condition is true for all tested cases. Therefore, it is reasonable to conjecture that the necessary condition for the correctness of the generalized RSA encryption scheme is true. However, a formal proof of this conjecture is still missing and needs to be done in future work.%
\end{subsectionptx}
\typeout{************************************************}
\typeout{Subsection 7.2 Efficient Computing of \(\Phi(N)\)}
\typeout{************************************************}
\begin{subsectionptx}{Subsection}{Efficient Computing of \(\Phi(N)\)}{}{Efficient Computing of \(\Phi(N)\)}{}{}{subsection-7-2}
The computation of \(\Phi(N)\) is another interesting problem. However, the brute-force method for computing \(\Phi(N)\) is not efficient enough for large values of \(N\). In future work, we will explore more efficient algorithms for computing \(\Phi(N)\).%
\end{subsectionptx}
\end{sectionptx}
\typeout{************************************************}
\typeout{References 8 References}
\typeout{************************************************}
\begin{references-section}{References}{References}{}{References}{}{}{references}
\begin{referencelist}
\bibitem[1]{RSA78}\label{RSA78}{}\hypertarget{RSA78}{}Ronald L Rivest, Adi Shamir, and Leonard Adleman, ``A method for obtaining digital signatures and public-key cryptosystems'', in \pubtitle{Communications of the ACM,}, 21(2):120–126, 1978.
\end{referencelist}
\end{references-section}
\end{document}